\documentclass[english]{article}
\usepackage[T1]{fontenc}
\usepackage[latin9]{inputenc}
\usepackage{geometry}
\geometry{verbose}
\setlength{\parskip}{\smallskipamount}
\setlength{\parindent}{0pt}
\usepackage{xcolor}
\usepackage{pdfcolmk}
\usepackage{amsthm}
\usepackage{amsmath}
\usepackage{amssymb}
\PassOptionsToPackage{normalem}{ulem}
\usepackage{ulem}

\makeatletter

\providecolor{lyxadded}{rgb}{0,0,1}
\providecolor{lyxdeleted}{rgb}{1,0,0}

\numberwithin{equation}{section}
\numberwithin{figure}{section}
\theoremstyle{plain}
\newtheorem{thm}{\protect\theoremname}
  \theoremstyle{remark}
  \newtheorem*{rem*}{\protect\remarkname}
  \theoremstyle{plain}
  \newtheorem{cor}[thm]{\protect\corollaryname}

\makeatother

\usepackage{babel}
  \providecommand{\corollaryname}{Corollary}
  \providecommand{\remarkname}{Remark}
\providecommand{\theoremname}{Theorem}

\begin{document}

\title{An exact reduction of the master equation to a strictly stable system
with an explicit expression for the stationary distribution}

\author{Daniel Soudry, Ron Meir {\normalsize }\\
{\normalsize Department of Electrical Engineering, }\\
{\normalsize the Laboratory for Network Biology Research, Technion,
Haifa , Israel}}
\maketitle
\begin{abstract}
The evolution of a continuous time Markov process with a finite number
of states is usually calculated by the Master equation - a linear
differential equations with a singular generator matrix. We derive
a general method for reducing the dimensionality of the Master equation
by one by using the probability normalization constraint, thus obtaining
a affine differential equation with a (non-singular) stable generator
matrix. Additionally, the reduced form yields a simple explicit expression
for the stationary probability distribution, which is usually derived
implicitly. Finally, we discuss the application of this method to
stochastic differential equations.
\end{abstract}

\section{Introduction}

Let $X\left(t\right)$ be a continuous time Markov process with discrete
states $\left\{ 1,2,...,M\right\} $, where $\ensuremath{1<M<\infty}$,
with $A_{ij}$ being the (non-negative) transition rate from state
$j$ to state $i$. We define $p_{i}\left(t\right)\in\left[0,1\right]$
to be the probability to be in state $i$ at time $t$, the probability
vector 
\begin{equation}
\ensuremath{{\bf \mathbf{p}}\left(t\right)\triangleq\left(p_{1}\left(t\right),...,p_{M}\left(t\right)\right)^{\top}}\in\left[0,1\right]^{M},\label{eq: p}
\end{equation}
and the rate matrix $\mathbf{A}$, so that 
\begin{equation}
\left(\mathbf{A}\right)_{ij}\triangleq\begin{cases}
A_{ij} & ,\,\mathrm{if}\,\, i\neq j\\
-\sum_{j\neq i}A_{ji} & ,\mathrm{\, if}\,\, i=j
\end{cases}\label{eq:A definition}
\end{equation}
and 
\begin{equation}
\frac{d\mathbf{p}\left(t\right)}{dt}=\mathbf{A}\mathbf{p}\left(t\right)\label{eq:master equation}
\end{equation}

is the corresponding master equation, with solution
\begin{equation}
\mathbf{p}\left(t\right)=\exp\left(\mathbf{A}t\right)\mathbf{p}\left(0\right)\,.\label{eq: master eq. solution}
\end{equation}

From the normalization of the probability, $\mathbf{p}\left(t\right)$
must be constrained at all time by 
\begin{equation}
{\bf e}^{\top}\mathbf{p}\left(t\right)=1\,\,\,\,;\,\,\,\,\mathbf{e}\triangleq\left(1,1,...,1\right)^{\top}.\label{eq:normalization}
\end{equation}
Note that from the properties of $\mathbf{A}$ (specifically, the
fact that $\mathbf{e}^{\top}\mathbf{A}=0$), if we start from an initial
condition $\mathbf{p}_{0}\in\left[0,1\right]^{M}$ so that $\mathbf{e}^{\top}\mathbf{p}_{0}=1$,
then, $\forall t$, $\mathbf{e}^{\top}\mathbf{p}\left(t\right)=1$
automatically - though this is not immediately obvious from the above
notation. 

In order to improve the interpretability of the above notation, we
combine Eq. \ref{eq:normalization} directly with Eq. \ref{eq:master equation}.
We shall henceforth assume that $X\left(t\right)$ is irreducible,
and reduce the dimensionality of the problem from $M$ to $M-1$ (section
\ref{sec:Reduction-of-the}). Note that if instead $X\left(t\right)$
is reducible with $K$ connected components, then the method suggested
here can be applied to each component separately, reducing the dimensionality
of the problem from $M$ to $M-K$ (see appendix \ref{sec:Generalization}).
The reduced form of the master equation (Eq. \ref{eq:reduce master eq 1}
or Eq. \ref{eq:reduce master eq 2}) has some {}``nice'' properties.
For example, in section \ref{sec:Properties-of} we prove that the
reduced form is strictly contracting; in section \ref{sec:Stationary-Distribution}
we show it is easy to find a novel explicit form for the stationary
(invariant) distribution using this reduced form (for the relation
with previous stationary distribution expressions see appendix \ref{sec:Equivalence-of-Stationary});
and in section \ref{sec:Application-to-Stochastic} we discuss the
application of this method to stochastic differential equations (SDE)
based on a population of independent Markov processes.

Note that similar reduction methods are rather popular for the special
case of a two state system $x\rightleftharpoons1-x$, in the context
of deterministic kinetic equations, which are the limit of the SDE
equations for an infinite population (e.g. \cite{Hodgkin1952}). In
a few special cases they were also used in SDE descriptions of specific
systems with more than one state \cite{Fox1994}.

\section{Reduction of the Master Equation\label{sec:Reduction-of-the}}

First, we make a few additional definitions:
\begin{enumerate}
\item $\mathbf{I}_{M}$ is the $M\times M$ identity matrix
\item $\mathbf{J}$ is $\mathbf{I}_{M}$ with it last row removed: $\mathbf{J}=\left(\begin{array}{cccccc}
1 & 0 & 0 & \cdots & 0 & 0\\
0 & 1 & 0 & \cdots & 0 & 0\\
\vdots & \vdots & \vdots & \ddots & 0 & 0\\
0 & 0 & 0 & \cdots & 1 & 0
\end{array}\right)$,~$\mathrm{dim}\left(J\right)=\left(M-1\right)\times M$
\item $\mathbf{e}_{M}\triangleq\left(0,0,...,1\right)^{\top}$
\item $\mathbf{H}\triangleq\left(\mathbf{I}_{M}-\mathbf{e}_{M}\mathbf{e}^{\top}\right)\mathbf{J}^{\top}=\left(\begin{array}{ccccc}
1 & 0 & 0 & \cdots & 0\\
0 & 1 & 0 & \cdots & 0\\
\vdots & \vdots & \vdots & \ddots & 0\\
0 & 0 & 0 & \cdots & 1\\
-1 & -1 & -1 & -1 & -1
\end{array}\right)$ , $\mathrm{dim}\left(H\right)=M\times\left(M-1\right)$
\item $\tilde{\mathbf{p}}\left(t\right)=\mathbf{J}\mathbf{p}\left(t\right)$,
$\mathrm{dim}\left(\tilde{\mathbf{p}}\right)=\left(M-1\right)\times1$
\end{enumerate}
Note that $\tilde{\mathbf{p}}\left(t\right)\in\left[0,1\right]^{M-1}$,
and the {}``hard'' normalization constraint has been lifted (instead
we remain with a {}``soft'' constraint ${\bf e}^{\top}\mathbf{J}^{\top}\mathbf{\tilde{p}}\left(t\right)\leq1$).
Using these definitions, we can use \ref{eq:normalization} to write
\begin{equation}
\mathbf{p}\left(t\right)=\mathbf{e}_{M}+\mathbf{H}\tilde{\mathbf{p}}\left(t\right)\label{eq: tilde p(t) 2 p(t)}
\end{equation}

Substituting this into Eq. \ref{eq:master equation} we obtain
\[
\mathbf{H}\frac{d\tilde{\mathbf{p}}\left(t\right)}{dt}=\mathbf{A}\mathbf{e}_{M}+\mathbf{A}\mathbf{H}\tilde{\mathbf{p}}\left(t\right)
\]

Multiplying this by $\mathbf{J}$ from the left, we obtain
\[
\mathbf{J}\mathbf{H}\frac{d\tilde{\mathbf{p}}\left(t\right)}{dt}=\mathbf{J}\mathbf{A}\mathbf{e}_{M}+\mathbf{J}\mathbf{A}\mathbf{H}\tilde{\mathbf{p}}\left(t\right)\,.
\]

Using the fact that 

\begin{equation}
\mathbf{J}\mathbf{H}=\mathbf{J}\left(\mathbf{I}-\mathbf{e}_{M}\mathbf{e}^{\top}\right)\mathbf{J}^{\top}\overset{}{=}\mathbf{J}\mathbf{J}^{T}=\mathbf{I}_{M-1}\,\,,\label{eq: JH}
\end{equation}
where we used $\mathbf{J}\mathbf{e}_{M}=0$ in the second equality.
Defining $\tilde{\mathbf{A}}\triangleq\mathbf{J}\mathbf{A}\mathbf{H},\tilde{\mathbf{b}}\triangleq\mathbf{J}\mathbf{A}\mathbf{e}_{M}$,
we can write our first reduced form of Eq. \ref{eq:master equation}
\begin{equation}
\frac{d\tilde{\mathbf{p}}\left(t\right)}{dt}=\tilde{\mathbf{b}}+\tilde{\mathbf{A}}\tilde{\mathbf{p}}\left(t\right)\,.\label{eq:reduce master eq 1}
\end{equation}

\section{Properties of $\tilde{\mathbf{A}}$\label{sec:Properties-of}}

Since\textbf{$\mathbf{A}$ }is a rate matrix of an irreducible process,
it has a single zero eigenvalue and all the other eigenvalues have
negative real parts \cite{Stirzaker2005}. Given this, we can find
the eigenvalues of $\tilde{\mathbf{A}}$. 
\begin{thm}
Assume $X\left(t\right)$ is an irreducible process, then $\tilde{\mathbf{A}}$
has the same eigenvalues as $\mathbf{A}$ - except its (unique) zero
eigenvalue.\label{thm:A eigenvalues}\end{thm}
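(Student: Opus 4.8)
The plan is to exhibit $\mathbf{A}$ as similar to a block upper-triangular matrix whose diagonal blocks are exactly $\tilde{\mathbf{A}}$ and the scalar $0$, and then read off the eigenvalues from the characteristic polynomial. The starting observation is that the $M\times M$ matrix $\mathbf{T}\triangleq\left(\mathbf{H}\,\,\,\mathbf{e}_{M}\right)$, obtained by augmenting the columns of $\mathbf{H}$ with $\mathbf{e}_{M}$, is invertible with inverse $\mathbf{T}^{-1}=\left(\begin{smallmatrix}\mathbf{J}\\ \mathbf{e}^{\top}\end{smallmatrix}\right)$. I would verify this directly from identities already in hand: $\mathbf{J}\mathbf{H}=\mathbf{I}_{M-1}$ (Eq. \ref{eq: JH}) and $\mathbf{J}\mathbf{e}_{M}=0$, together with the two short computations $\mathbf{e}^{\top}\mathbf{H}=0$ and $\mathbf{e}^{\top}\mathbf{e}_{M}=1$, which combine to give $\mathbf{T}^{-1}\mathbf{T}=\mathbf{I}_{M}$.

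Next I would form the conjugate $\mathbf{T}^{-1}\mathbf{A}\mathbf{T}$ and expand it blockwise. The four blocks are $\mathbf{J}\mathbf{A}\mathbf{H}=\tilde{\mathbf{A}}$, $\mathbf{J}\mathbf{A}\mathbf{e}_{M}=\tilde{\mathbf{b}}$, $\mathbf{e}^{\top}\mathbf{A}\mathbf{H}$, and $\mathbf{e}^{\top}\mathbf{A}\mathbf{e}_{M}$. The crucial structural input is the conservation identity $\mathbf{e}^{\top}\mathbf{A}=0$ already noted in the introduction, which annihilates the entire bottom row, so that
\[
\mathbf{T}^{-1}\mathbf{A}\mathbf{T}=\begin{pmatrix}\tilde{\mathbf{A}} & \tilde{\mathbf{b}}\\ 0 & 0\end{pmatrix}.
\]
Because this matrix is block upper triangular, its characteristic polynomial factors as $\det\left(\lambda\mathbf{I}_{M}-\mathbf{A}\right)=\lambda\cdot\det\left(\lambda\mathbf{I}_{M-1}-\tilde{\mathbf{A}}\right)$, and conjugation preserves the spectrum. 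Hence, as multisets, the eigenvalues of $\mathbf{A}$ are precisely those of $\tilde{\mathbf{A}}$ together with one additional $0$.

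It then remains to argue that this additional $0$ absorbs the \emph{entire} zero eigenvalue of $\mathbf{A}$, i.e. that $\tilde{\mathbf{A}}$ has no zero eigenvalue of its own. Here irreducibility enters: by the cited property the zero eigenvalue of $\mathbf{A}$ has algebraic multiplicity exactly one, so $\lambda=0$ is a \emph{simple} root of $\det\left(\lambda\mathbf{I}_{M}-\mathbf{A}\right)$. The factorization above then forces $\det\left(-\tilde{\mathbf{A}}\right)\neq0$, i.e. $0$ is not an eigenvalue of $\tilde{\mathbf{A}}$. Combining, the $M-1$ eigenvalues of $\tilde{\mathbf{A}}$ coincide with the $M-1$ nonzero eigenvalues of $\mathbf{A}$, which is the claim.

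I expect the only delicate point to be exactly this multiplicity bookkeeping: the block-triangular factorization by itself yields only $\mathrm{spec}\left(\mathbf{A}\right)=\mathrm{spec}\left(\tilde{\mathbf{A}}\right)\cup\left\{ 0\right\}$ as multisets, and one must invoke the simplicity of the zero eigenvalue (an irreducibility input, not a purely algebraic consequence of the reduction) to exclude a coincidental zero eigenvalue of $\tilde{\mathbf{A}}$. Everything else, namely the invertibility of $\mathbf{T}$ and the vanishing of the bottom row, is routine matrix algebra assembled from identities already established in Section \ref{sec:Reduction-of-the}.
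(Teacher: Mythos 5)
Your proof is correct, and it reaches the theorem by a genuinely different route than the paper. The paper computes the characteristic polynomial of $\tilde{\mathbf{A}}$ head-on: it uses Sylvester's determinant theorem to trade the $\left(M-1\right)\times\left(M-1\right)$ determinant $\left|\mathbf{J}\mathbf{A}\mathbf{H}-\lambda\mathbf{I}_{M-1}\right|$ for an $M\times M$ one, then uses $\mathbf{J}^{\top}\mathbf{J}=\mathbf{I}-\mathbf{e}_{M}\mathbf{e}_{M}^{\top}$ and $\mathbf{e}^{\top}\mathbf{A}=0$ to collapse it to $\lambda^{-1}\left|\mathbf{A}-\lambda\mathbf{I}_{M}\right|$, and finally cancels the factor $\lambda$ against the root $\lambda_{1}=0$. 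You instead conjugate $\mathbf{A}$ by the explicit matrix $\mathbf{T}=\left(\mathbf{H}\,\,\,\mathbf{e}_{M}\right)$, whose inverse stacks $\mathbf{J}$ on top of $\mathbf{e}^{\top}$, to obtain the block upper-triangular form
\[
\mathbf{T}^{-1}\mathbf{A}\mathbf{T}=\begin{pmatrix}\tilde{\mathbf{A}} & \tilde{\mathbf{b}}\\
0 & 0
\end{pmatrix},
\]
so that $\det\left(\lambda\mathbf{I}_{M}-\mathbf{A}\right)=\lambda\,\det\left(\lambda\mathbf{I}_{M-1}-\tilde{\mathbf{A}}\right)$. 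All the identities you rely on ($\mathbf{J}\mathbf{H}=\mathbf{I}_{M-1}$, $\mathbf{J}\mathbf{e}_{M}=0$, $\mathbf{e}^{\top}\mathbf{H}=0$, $\mathbf{e}^{\top}\mathbf{e}_{M}=1$, $\mathbf{e}^{\top}\mathbf{A}=0$) check out, and irreducibility is invoked exactly once --- to make $\lambda=0$ a simple root of the characteristic polynomial of $\mathbf{A}$, so that $\tilde{\mathbf{A}}$ cannot have a zero eigenvalue of its own --- which is the same single place it enters the paper's argument. Your route buys several things: it needs no Sylvester theorem; it keeps algebraic multiplicities transparent at every step, which is precisely the bookkeeping point you flag at the end; it avoids the paper's formal manipulations with $\lambda^{-1}$, which as written hold only for $\lambda\neq0$ (they extend to all $\lambda$ because both sides are polynomials, a point the paper leaves implicit) and which also carry unremarked sign factors of the form $\left(-1\right)^{M}$; and the conjugating matrix is not a throwaway device, since $\mathbf{T}^{-1}\mathbf{p}=\left(\tilde{\mathbf{p}}^{\top},\,\mathbf{e}^{\top}\mathbf{p}\right)^{\top}$ shows your similarity transformation is exactly the change of coordinates underlying the reduction of section \ref{sec:Reduction-of-the}, with the last coordinate carrying the conserved quantity $\mathbf{e}^{\top}\mathbf{p}$. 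What the paper's computation buys in exchange is compactness: a single self-contained chain of determinant identities, with no auxiliary matrix to introduce or invert.
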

\begin{proof}
To find the eigenvalues of $\tilde{\mathbf{A}}$, we examine the characteristic
polynomial 
\begin{eqnarray*}
\left|\tilde{\mathbf{A}}-\lambda\mathbf{I}_{M-1}\right| & = & \left|\mathbf{J}\mathbf{A}\mathbf{H}-\lambda\mathbf{I}_{M-1}\right|\\
 & \overset{\left(1\right)}{=} & \lambda^{M-1}\left|\lambda^{-1}\mathbf{J}\mathbf{A}\left(\mathbf{I}-\mathbf{e}_{M}\mathbf{e}^{\top}\right)\mathbf{J}^{\top}-\mathbf{I}_{M-1}\right|\\
 & \overset{\left(2\right)}{=} & \lambda^{M-1}\left|\lambda^{-1}\left(\mathbf{I}-\mathbf{e}_{M}\mathbf{e}^{\top}\right)\mathbf{J}^{\top}\mathbf{J}\mathbf{A}-\mathbf{I}_{M}\right|\\
 & \overset{\left(3\right)}{=} & \lambda^{-1}\left|\left(\mathbf{I}-\mathbf{e}_{M}\mathbf{e}^{\top}\right)\left(\mathbf{I}-\mathbf{e}_{M}\mathbf{e}_{M}^{\top}\right)\mathbf{A}-\lambda\mathbf{I}_{M}\right|\\
 & \overset{\left(4\right)}{=} & \lambda^{-1}\left|\mathbf{A}-\lambda\mathbf{I}_{M}\right|\\
 & \overset{\left(5\right)}{=} & \lambda^{-1}\prod_{i=1}^{M}\left(\lambda-\lambda_{i}\right)\\
 & = & \prod_{i=2}^{M}\left(\lambda-\lambda_{i}\right)
\end{eqnarray*}
where in $\left(1\right)$ we used the definition of $\mathbf{H}$
and the fact that $\left|\lambda\mathbf{X}\right|=\lambda^{M}\left|\mathbf{X}\right|$
for any $M\times M$ matrix and scalar $\lambda$, in$\left(2\right)$
we used Sylvester's determinant theorem  ($\left|I_{p}+\mathbf{B}\mathbf{C}\right|=\left|I_{p}+\mathbf{C}\mathbf{B}\right|$
for all $\mathbf{B}$, $\mathbf{C}$ matrices of size $p\times n$
and $n\times p$ respectively), in $\left(3\right)$ we used $\mathbf{J}^{\top}\mathbf{J}=\left(\mathbf{I}-\mathbf{e}_{M}\mathbf{e}_{M}^{\top}\right)$
and $\left|\lambda\mathbf{X}\right|=\lambda^{M}\left|\mathbf{X}\right|$
again, in $\left(4\right)$ we used $\mathbf{e}^{\top}\mathbf{e}_{M}=1$
and $\mathbf{e}^{\top}\mathbf{A}=0$ and in $\left(5\right)$ we denoted
by $\left\{ \lambda_{i}\right\} _{i=1}^{M}$ the eigenvalues of $\mathbf{A}$,
with $\lambda_{1}=0$. The last line concludes the proof.\end{proof}
\begin{rem*}
Although the eigenvalues of $\mathbf{A}$ and $\tilde{\mathbf{A}}$
are the same, their corresponding eigenvectors \textbf{$\mathbf{v}_{m}$
}and $\tilde{\mathbf{v}}_{m}$ are not tied by a simple projection,
namely \textbf{$\tilde{\mathbf{v}}_{m}\neq\mathbf{J}\mathbf{v}_{m}$.}
\end{rem*}
Recall again that a rate matrix \textbf{$\mathbf{A}$ }of an irreducible
process has a single zero eigenvalue and all the other eigenvalues
have negative real parts \cite{Stirzaker2005}. Using theorem \ref{thm:A eigenvalues}
this immediately gives
\begin{cor}
$\tilde{\mathbf{A}}$ is a stable matrix - i.e. all its eigenvalues
have a strictly negative real part.\label{cor: A stable}
\end{cor}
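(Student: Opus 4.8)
The plan is to derive this directly from Theorem~\ref{thm:A eigenvalues} together with the cited spectral property of irreducible rate matrices; almost no new work is required. First I would recall that by Theorem~\ref{thm:A eigenvalues} the spectrum of $\tilde{\mathbf{A}}$ is exactly $\{\lambda_2,\dots,\lambda_M\}$, i.e.\ the multiset of eigenvalues of $\mathbf{A}$ with the single zero eigenvalue $\lambda_1=0$ deleted. Since $X(t)$ is irreducible, the rate matrix $\mathbf{A}$ has a simple zero eigenvalue and every remaining eigenvalue $\lambda_2,\dots,\lambda_M$ satisfies $\mathrm{Re}(\lambda_i)<0$ \cite{Stirzaker2005}. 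Because the spectrum of $\tilde{\mathbf{A}}$ consists precisely of these strictly-left-half-plane values, every eigenvalue of $\tilde{\mathbf{A}}$ has strictly negative real part, which is the definition of a stable matrix.

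The one point worth flagging is that the deleted eigenvalue is the zero one and not one of the $\lambda_i$ with $\mathrm{Re}(\lambda_i)<0$: this is guaranteed because irreducibility makes the zero eigenvalue \emph{simple}, so $\lambda_1=0$ appears with multiplicity one in the spectrum of $\mathbf{A}$, and its single removal in Theorem~\ref{thm:A eigenvalues} leaves no eigenvalue on the imaginary axis. Were $0$ a repeated eigenvalue (the reducible case), one removal would not suffice and $\tilde{\mathbf{A}}$ would merely be marginally stable, which is exactly why the irreducibility hypothesis is essential.

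If one wanted the argument to be self-contained rather than leaning on \cite{Stirzaker2005}, I would justify the spectral property of $\mathbf{A}$ directly: the column-sum-zero structure $\mathbf{e}^{\top}\mathbf{A}=0$ gives $0$ as an eigenvalue, and applying Gershgorin's disc theorem to the columns of $\mathbf{A}$ confines every eigenvalue to a disc centered at the non-positive real diagonal entry $-\sum_{i\neq j}A_{ij}$ with radius $\sum_{i\neq j}A_{ij}$, hence to the closed left half-plane, touching the imaginary axis only at the origin. Simplicity of the zero eigenvalue and the absence of any other purely imaginary eigenvalue would then follow from Perron--Frobenius applied to the nonnegative irreducible matrix $\mathbf{A}+c\mathbf{I}_M$ for sufficiently large $c$. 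I do not expect any genuine obstacle here: the substantive content already resides in Theorem~\ref{thm:A eigenvalues}, and the corollary is simply its combination with the well-known eigenvalue location for irreducible generators.
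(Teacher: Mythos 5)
Your proof is correct and follows exactly the paper's own route: combine Theorem~\ref{thm:A eigenvalues} with the cited fact that an irreducible rate matrix has a simple zero eigenvalue and all other eigenvalues in the open left half-plane. The additional remarks on simplicity of the zero eigenvalue and the optional Gershgorin/Perron--Frobenius justification are sound but not needed beyond what the paper already invokes.
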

Specifically, since $\tilde{\mathbf{A}}$ does not have any zero eigenvalues,
\begin{cor}
$\tilde{\mathbf{A}}$ is a non-singular matrix, and therefore, invertible.\label{cor: A non-singular}
\end{cor}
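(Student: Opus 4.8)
The plan is to derive non-singularity directly from the spectral information already established, since invertibility of a square matrix is equivalent to the absence of a zero eigenvalue. First I would recall Corollary \ref{cor: A stable}, which asserts that every eigenvalue of $\tilde{\mathbf{A}}$ has strictly negative real part. The single observation to extract from this is that $0$ is \emph{not} among the eigenvalues of $\tilde{\mathbf{A}}$: a number with strictly negative real part is in particular nonzero.

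The next step is to connect ``$0$ is not an eigenvalue'' to ``non-singular''. For a square matrix this is standard: $\tilde{\mathbf{A}}$ is singular if and only if $\det\tilde{\mathbf{A}}=0$, which happens if and only if $\lambda=0$ is a root of the characteristic polynomial, i.e. an eigenvalue. Since we have just ruled out $\lambda=0$ as an eigenvalue, $\det\tilde{\mathbf{A}}\neq0$ and $\tilde{\mathbf{A}}$ is non-singular; a non-singular square matrix is by definition invertible.

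Alternatively, and perhaps more transparently given the machinery already in place, I could evaluate the explicit characteristic polynomial computed in the proof of Theorem \ref{thm:A eigenvalues}, namely $\prod_{i=2}^{M}(\lambda-\lambda_i)$, at $\lambda=0$. This gives $\det\tilde{\mathbf{A}}=\prod_{i=2}^{M}(-\lambda_i)=(-1)^{M-1}\prod_{i=2}^{M}\lambda_i$, and since each $\lambda_i$ with $i\geq2$ has strictly negative real part it is nonzero, so the product is nonzero and $\det\tilde{\mathbf{A}}\neq0$ follows at once.

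There is essentially no obstacle here: this is a one-line consequence of Corollary \ref{cor: A stable} together with the elementary fact that a matrix is invertible precisely when zero is not one of its eigenvalues. The only thing to be careful about is to invoke strict negativity of the real parts (not merely nonpositivity), which is exactly what stability in Corollary \ref{cor: A stable} provides, so that $0$ is genuinely excluded from the spectrum.
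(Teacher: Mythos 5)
Your proposal is correct and follows essentially the same route as the paper, which deduces this corollary immediately from Corollary \ref{cor: A stable}: since every eigenvalue of $\tilde{\mathbf{A}}$ has strictly negative real part, zero is not an eigenvalue, hence $\tilde{\mathbf{A}}$ is non-singular and invertible. Your alternative computation via the characteristic polynomial $\prod_{i=2}^{M}(\lambda-\lambda_i)$ is just a restatement of the same spectral fact, so it does not constitute a genuinely different argument.
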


\section{Stationary Distribution\label{sec:Stationary-Distribution}}

Recall (\cite{Stirzaker2005}) that if $X\left(t\right)$ is irreducible
then $\mathbf{p}\left(t\right)\rightarrow\mathbf{p}_{\infty}$, a
stationary distribution which is the (unique) zero eigenvector of
the matrix $\mathbf{A}$, 

\begin{equation}
0=\mathbf{A}\mathbf{p}_{\infty}\,.\label{eq:steady state eq}
\end{equation}
This is an \emph{implicit} equation for $\mathbf{p}_{\infty}$. However,
using the our reduced version, it is easy to find an explicit expression
for the stationary distribution .

Using Eq. \ref{eq:reduce master eq 1} and Corollary \ref{cor: A non-singular},
we define 
\begin{equation}
\tilde{\mathbf{p}}_{\infty}\triangleq-\tilde{\mathbf{A}}^{-1}\tilde{\mathbf{b}}\label{eq: tilde p_inf}
\end{equation}

and re-write Eq. \ref{eq:reduce master eq 1} as 
\begin{equation}
\frac{d\tilde{\mathbf{p}}\left(t\right)}{dt}=\tilde{\mathbf{A}}\left(\tilde{\mathbf{p}}\left(t\right)-\tilde{\mathbf{p}}_{\infty}\right)\,,\label{eq:reduce master eq 2}
\end{equation}

which is our second reduced form of Eq. \ref{eq:master equation}.

Since \textbf{$\tilde{\mathbf{A}}$ }is stable, $\mathbf{p}\left(t\right)\rightarrow\tilde{\mathbf{p}}_{\infty}$,
and so the solution of \ref{eq:reduce master eq 2} is 
\[
\tilde{\mathbf{p}}\left(t\right)=\tilde{\mathbf{p}}_{\infty}+\left(\tilde{\mathbf{p}}\left(0\right)-\tilde{\mathbf{p}}_{\infty}\right)e^{\tilde{\mathbf{A}}t}\,.
\]

And so, we found an explicit expression for the steady state distribution
in the reduced form
\[
\tilde{\mathbf{p}}_{\infty}=-\left(\mathbf{JAH}\right)^{-1}\mathbf{J}\mathbf{A}\mathbf{e}_{M}\,\,.
\]

Returning to the original form, using Eq. \ref{eq: tilde p(t) 2 p(t)},
we obtain the explicit expression 
\begin{equation}
\mathbf{p}_{\infty}=\left(\mathbf{I}_{M}-\mathbf{H}\left(\mathbf{JAH}\right)^{-1}\mathbf{J}\mathbf{A}\right)\mathbf{e}_{M}\,\,.\label{eq: p_inf 1}
\end{equation}

In section \ref{sec:Equivalence-of-Stationary} we compare this expression
with previous results. Note that for a discrete time Markov chain
with transition matrix $\mathbf{P}$, we can again find the stationary
distribution by substituting $\mathbf{A}=\mathbf{I}-\mathbf{P}$ in
either Eq. \ref{eq: p_inf 1} or \ref{eq: p_inf 2}.

\section{The reduction methods in stochastic differential equations\label{sec:Application-to-Stochastic}}

Consider a population of identical, irreducible and independent Markov
processes $\left\{ X_{n}\left(t\right)\right\} _{n=1}^{N}$, where
each process has states $\left\{ 1,2,...,M\right\} $, where $\ensuremath{1<M<\infty}$.
Also, for all processes, $A_{ij}$ is the transition rate from state
$j$ to state $i$, and $\mathbf{A}$ is the corresponding matrix.
We denote by $x_{i}\left(t\right)$ the fraction of processes that
are in state $i$ at time $t$ (not following convention of using
upper case only for random variables). Formally 
\[
x_{i}\left(t\right)\triangleq\frac{1}{N}\sum_{n=1}^{N}\mathcal{I}\left[X_{n}\left(t\right)=i\right]\,,
\]
where $\mathcal{I}\left[\cdot\right]$ is the indicator function.
Also, we denote $\mathbf{x}=\left(x_{1},...,x_{M}\right)^{\top}$.
From normalization,
\begin{equation}
{\bf e}^{\top}\mathbf{x}\left(t\right)=1\,\,\,\,;\,\,\,\,\mathbf{e}\triangleq\left(1,1,...,1\right)^{\top}.\label{eq:normalization-1}
\end{equation}

As derived in \cite{OrioSoudry2011}, for large enough $N$ we can
approximate the dynamics of \textbf{$\mathbf{x}$ }by the following
$n-$dimensional stochastic differential equation (SDE)
\begin{equation}
\dot{\mathbf{x}}\left(t\right)=\mathbf{A}\mathbf{x}\left(t\right)+\mathbf{B}\left(\mathbf{x}\left(t\right)\right)\xi\left(t\right)\label{eq: SDE 1}
\end{equation}
where $\xi$ is a vector of $M\left(M-1\right)/2$ independent white
noise processes with zero mean and correlation $\left\langle \xi\left(t\right)\xi\left(t'\right)\right\rangle =\delta\left(t-t'\right)$
($\left\langle \cdot\right\rangle $ denotes ensemble expectation),
and \textbf{$\mathbf{B}$ }is a (sparse) $M\times M\left(M-1\right)/2$
matrix, with
\[
B_{ik}=\frac{1}{\sqrt{N}}\mathrm{sgn}\left(i-m_{ik}\right)\sqrt{A_{im_{ik}}x_{m_{ik}}+A_{m_{ik}i}x_{i}}
\]

where $k$ is the index of a transition pair ($i\rightleftharpoons j$)
and $m_{ik}$ is index of the state connected to state $i$ by transition
pair $k$. Note that since $N$ is large, any Ito correction would
be of size $O\left(N^{-2}\right)$, and is therefore neglected here.

We can reduce the form of Eq. \ref{eq: SDE 1} using \ref{eq:normalization-1}
in a similar way as we did for the Markov process. Defining $\tilde{\mathbf{A}}=\mathbf{J}\mathbf{A}\mathbf{H}$
(as before), $\tilde{\mathbf{B}}=\mathbf{J}\mathbf{B}$ (with $x_{K}$
replaced by $1-x_{1}-x_{2}...-x_{K-1}$) and $\tilde{\mathbf{x}}_{\infty}\triangleq\tilde{\mathbf{p}}_{\infty}=\left(\tilde{\mathbf{A}}\right)^{-1}\mathbf{J}\mathbf{A}\mathbf{e}_{M}$,
we obtain the following equation for the reduced state vector $\tilde{\mathbf{x}}=\mathbf{J}\mathbf{x}$
\begin{equation}
\frac{d\tilde{\mathbf{x}}\left(t\right)}{dt}=\mathbf{\tilde{A}}\left(\tilde{\mathbf{x}}\left(t\right)-\tilde{\mathbf{x}}_{\infty}\right)+\tilde{\mathbf{B}}\left(\tilde{\mathbf{x}}\left(t\right)\right)\boldsymbol{\xi}\left(t\right)\,.\label{eq:SDE 2}
\end{equation}
As before $\tilde{\mathbf{A}}$ is a stable matrix. Additionally,
the reduced diffusion matrix $\tilde{\mathbf{D}}\triangleq\tilde{\mathbf{B}}\tilde{\mathbf{B}}^{\top}$
is positive definite (in contrast to $\mathbf{D}=\mathbf{B}\mathbf{B}^{\top}$,
which is only semi-definite). This stems from the combination of the
following facts: (1) $\tilde{\mathbf{D}}=\tilde{\mathbf{B}}\tilde{\mathbf{B}}^{\top}$
is symmetric (2) The rank of $\tilde{\mathbf{B}}$ is $M-1$ (for
irreducible $X_{n}\left(t\right)$) (3) For any real matrix $\mathbf{X}$,
$\mathrm{rank}\left(\mathbf{X}\mathbf{X}^{\top}\right)=\mathrm{rank}\left(\mathbf{X}\right)$
\cite{Brookes2005}.

\appendix

\part*{Appendix}

\section{Generalization to a reducible processes\label{sec:Generalization}}

Assume now that $X\left(t\right)$ is a reducible process, with $K$
connected components $C_{k}$, $k=\left\{ 1,2,...,K\right\} $, where
$C_{k}$ contains $M^{\left(k\right)}$ states. In this case, we can
write 
\[
\mathbf{A}=\left(\begin{array}{cccc}
\mathbf{A}^{\left(1\right)} & 0 & \cdots & 0\\
0 & \mathbf{A}^{\left(2\right)} & \cdots & 0\\
\vdots & \vdots & \ddots & \vdots\\
0 & 0 & \cdots & \mathbf{A}^{\left(K\right)}
\end{array}\right)\,.
\]
Also, the normalization condition (Eq. \ref{eq:normalization}) can
be expanded to each component separately, 
\[
\forall k:\mathbf{u}_{k}^{\top}\mathbf{p}\left(t\right)=q_{k}\,\,\,;\,\,\,\left(\mathbf{u}_{k}\right)_{m}\triangleq\mathcal{I}\left[m\in C_{k}\right]
\]
where $\sum_{k}q_{k}=1$. In order to derive the reduced form of Eq.
\ref{eq:master equation} in this case, we just have to find the reduced
form for each component separately, and then concatenate the equations,
reducing the dimensionality from $M$ to $M-K$. Formally,we define:
\begin{enumerate}
\item $a_{k}$ is the index of the last ($M^{\left(k\right)}$ ) state in
$C_{k}$.
\item $\mathbf{L}$ is $\mathbf{I}_{M}$ with the rows corresponding to
$\left\{ a_{k}\right\} _{k=1}^{K}$ removed.
\item $\mathbf{f}$ is an length-$M$ vector for which all the indices $\left\{ a_{k}\right\} _{k=1}^{K}$
equal $q_{k}$ and all the rest equal $0$.
\item $\mathbf{H}_{m}$ as \textbf{$\mathbf{H}$} with $M=m$. 
\item $\mathbf{G}=\left(\begin{array}{cccc}
\mathbf{H}_{M^{\left(1\right)}} & 0 & \cdots & 0\\
0 & \mathbf{H}_{M^{\left(2\right)}} & \cdots & 0\\
\vdots & \vdots & \ddots & \vdots\\
0 & 0 & \cdots & \mathbf{H}_{M^{\left(K\right)}}
\end{array}\right)$
\item $\tilde{\mathbf{p}}\left(t\right)=\mathbf{J}\mathbf{p}\left(t\right)$
\end{enumerate}
Using these definitions, we can use \ref{eq:normalization} to write
\begin{equation}
\mathbf{p}\left(t\right)=\mathbf{f}+\mathbf{G}\tilde{\mathbf{p}}\left(t\right)\,.\label{eq: tilde p(t) 2 p(t)-1}
\end{equation}

Substituting this into Eq. \ref{eq:master equation} we obtain
\[
\mathbf{G}\frac{d\tilde{\mathbf{p}}\left(t\right)}{dt}=\mathbf{A}\mathbf{f}+\mathbf{A}\mathbf{G}\tilde{\mathbf{p}}\left(t\right)\,.
\]

Multiplying this by $\mathbf{J}$ from the left, we obtain
\[
\mathbf{L}\mathbf{\mathbf{G}}\frac{d\tilde{\mathbf{p}}\left(t\right)}{dt}=\mathbf{L}\mathbf{A}\mathbf{f}+\mathbf{L}\mathbf{A}\mathbf{\mathbf{G}}\tilde{\mathbf{p}}\left(t\right)\,.
\]

Using the fact that 

\begin{equation}
\mathbf{L}\mathbf{G}=\mathbf{I}_{M-K}\,\,,\label{eq: LG}
\end{equation}
and defining $\tilde{\mathbf{A}}\triangleq\mathbf{L}\mathbf{A}\mathbf{G},\tilde{\mathbf{b}}\triangleq\mathbf{L}\mathbf{A}\mathbf{f}$,
we can write our first reduced form of Eq. \ref{eq:master equation}
\begin{equation}
\frac{d\tilde{\mathbf{p}}\left(t\right)}{dt}=\tilde{\mathbf{b}}+\tilde{\mathbf{A}}\tilde{\mathbf{p}}\left(t\right)\,.\label{eq:reduce master eq 3}
\end{equation}

which has dimension $M-K$. All the other results we derived for the
irreducible case (i.e. the properties of $\tilde{\mathbf{A}}$, the
stationary distribution, etc.) can be similarly proven.

\section{Relations to previous results - stationary distribution expression\label{sec:Equivalence-of-Stationary}}

In the main text (Eq. \ref{eq: p_inf 1}) we derived an expression
for the stationary distribution
\begin{equation}
\mathbf{p}_{\infty}=\left(\mathbf{I}_{M}+\mathbf{H}\left(\mathbf{JAH}\right)^{-1}\mathbf{J}\mathbf{A}\right)\mathbf{e}_{M}\,\,.\label{eq:p_inf 1 1}
\end{equation}

Note however, that this is not the first explicit form suggested for
the solution of Eq. \ref{eq:steady state eq}. For example, \cite{Paige1975}
proved that
\begin{equation}
\mathbf{p}_{\infty}=\left(\mathbf{A}+\mathbf{v}\mathbf{e}^{\top}\right)^{-1}\mathbf{v}\label{eq: p_inf 2}
\end{equation}
for any $\mathbf{v}$ such that $\mathbf{e}^{\top}\mathbf{v}\neq0$.

Both Eq. \ref{eq: p_inf 1} and Eq. \ref{eq: p_inf 2} must be equal
and behave similarly if we vary $\mathbf{A}$. For example, Eq. \ref{eq:p_inf 1 1}
immediately implies that $\mathbf{p}_{\infty}$ does not change if
we scale $\mathbf{A}\rightarrow c\mathbf{A}$ by some non-zero constant,
as implied by Eq. \ref{eq:steady state eq}. This can be seen also
in Eq. \ref{eq: p_inf 2} if we scale $\mathbf{v}\rightarrow c\mathbf{v}$
simultaneously with the scaling in \textbf{$\mathbf{A}$}.\textbf{ }

To prove that both equations coincide (for any choice of $\mathbf{v}$),
we equate them, expecting to derive an identity:
\begin{eqnarray*}
\mathbf{v} & = & \left(\mathbf{A}+\mathbf{v}\mathbf{e}^{\top}\right)\left(\mathbf{I}_{M}+\mathbf{H}\left(\mathbf{JAH}\right)^{-1}\mathbf{J}\mathbf{A}\right)\mathbf{e}_{M}\\
 & = & \mathbf{A}\mathbf{e}_{M}+\mathbf{A}\mathbf{H}\left(\mathbf{JAH}\right)^{-1}\mathbf{J}\mathbf{A}\mathbf{e}_{M}+\mathbf{v}\mathbf{e}^{\top}\mathbf{e}_{M}+\mathbf{v}\mathbf{e}^{\top}\mathbf{H}\left(\mathbf{JAH}\right)^{-1}\mathbf{J}\mathbf{A}\mathbf{e}_{M}
\end{eqnarray*}

Since $\mathbf{e}^{\top}\mathbf{e}_{M}=1$ and $\mathbf{e}^{\top}\mathbf{H}=0$,
we obtain
\[
0=\mathbf{A}\mathbf{e}_{M}+\mathbf{A}\mathbf{H}\left(\mathbf{JAH}\right)^{-1}\mathbf{J}\mathbf{A}\mathbf{e}_{M}
\]

multiplying this by $\mathbf{J}$ from the left we get $0=0$, as
expected. Multiplying by $\mathbf{e}^{\top}$ from the left also gives
$0=0$, since $\mathbf{e}^{\top}\mathbf{A}=0$. Since the row vectors
of \textbf{$\mathbf{J}$}, combined with $\mathbf{e}^{\top}$, span
the vector space $\mathbb{R}^{M}$, this concludes our proof.

\bibliographystyle{plain}

\end{document}